\providecommand{\tabularnewline}{\\}
\theoremstyle{remark}
\newtheorem{notation}{\protect\notationname}
\theoremstyle{definition}
\newtheorem{defn}{\protect\definitionname}
\theoremstyle{remark}
\newtheorem{rem}{\protect\remarkname}
\theoremstyle{plain}
\newtheorem{assumption}{\protect\assumptionname}
\theoremstyle{plain}
\newtheorem{lem}{\protect\lemmaname}
\theoremstyle{plain}
\newtheorem{prop}{\protect\propositionname}
\theoremstyle{definition}
 \newtheorem{example}{\protect\examplename}
\providecommand{\assumptionname}{Assumption}
\providecommand{\definitionname}{Definition}
\providecommand{\examplename}{Example}
\providecommand{\lemmaname}{Lemma}
\providecommand{\notationname}{Notation}
\providecommand{\propositionname}{Proposition}
\providecommand{\remarkname}{Remark}
\begin{document}
\title{Network Structure and Naive Sequential Learning\thanks{We thank Daron Acemoglu, J. Aislinn Bohren, Jetlir Duraj, Ben Enke,
Erik Eyster, Drew Fudenberg, Ben Golub, David Laibson, Jonathan Libgober,
Margaret Meyer, Pooya Molavi, Xiaosheng Mu, Matthew Rabin, Ran Spiegler,
Tomasz Strzalecki, Alireza Tahbaz-Salehi, Omer Tamuz, Linh T. Tô,
Muhamet Yildiz, and three anonymous referees for useful comments.}}
\author{Krishna Dasaratha\thanks{Harvard University. Email: \texttt{\protect\href{mailto:krishnadasaratha\%40gmail.com}{krishnadasaratha@gmail.com}}}
\and Kevin He\thanks{California Institute of Technology and University of Pennsylvania.
Email: \texttt{\protect\href{mailto:hesichao\%40gmail.com}{hesichao@gmail.com}}}}
\date{{\normalsize{}}%
\begin{tabular}{rl}
First version: & January 17, 2017\tabularnewline
This version: & February 29, 2020\tabularnewline
\end{tabular}}
\maketitle
\begin{abstract}
\begin{singlespace}
{\normalsize{}We study a sequential-learning model featuring a network
of naive agents with Gaussian information structures. Agents apply
a heuristic rule to aggregate predecessors' actions. They weigh these
actions according the strengths of their social connections to different
predecessors. We show this rule arises endogenously when agents wrongly
believe others act solely on private information and thus neglect
redundancies among observations. We provide a simple linear formula
expressing agents' actions in terms of network paths and use this
formula to characterize the set of networks where naive agents eventually
learn correctly. This characterization implies that, on all networks
where later agents observe more than one neighbor, there exist disproportionately
influential early agents who can cause herding on incorrect actions.
Going beyond existing social-learning results, we compute the probability
of such mislearning exactly. This allows us to compare likelihoods
of incorrect herding, and hence expected welfare losses, across network
structures. The probability of mislearning increases when link densities
are higher and when networks are more integrated. In partially segregated
networks, divergent early signals can lead to persistent disagreement
between groups.}{\normalsize\par}
\end{singlespace}

{\normalsize{}\thispagestyle{empty}
\setcounter{page}{0}}{\normalsize\par}
\end{abstract}
\interfootnotelinepenalty=10000

\newpage{}

\section{\label{sec:Introduction}Introduction}

Consider an environment with a sequence of agents facing the same
decision problem in turn, where each agent considers both her private
information and the behavior of those who came before her in reaching
a decision. When consumers choose between rival products, for instance,
their decisions are often informed by the choices of early customers.
When doctors decide on a treatment for their patients, they consult
best practices established by other clinicians who came before them.
Additionally, when a new theory or rumor is introduced into a society,
individuals are swayed by the discussions and opinions of those who
have already taken a clear stance on the new idea.

A key feature of these examples is that agents observe only the behavior
of a certain subset of their predecessors. For example, a consumer
may know about her friends' recent purchases, but not the product
choices of anyone outside her social circle. In general, each sequential
social-learning problem has an \emph{observation network} that determines
which predecessors are observable to each agent. Observation networks
associated with different learning problems may vary in density, extent
of segregation, and other structural properties. Hence, our central
research question: how does the structure of the observation network
affect the probability of correct social learning in the long run?

To answer this question, our model must capture key behavioral patterns
in how individuals process social information in these learning settings.
Empirical research on social learning suggests that humans often exhibit
\emph{inferential} \emph{naiveté}, failing to understand that their
predecessors' actions reflect a combination of private information
and the inference those predecessors have drawn from the behavior
of still others (e.g., \citet*{chandrasekhar2015testing,eyster2015experiment}).
Returning to the examples, a consumer may mistake a herd on a product
for evidence that everyone has positive private information about
the product's quality. In an online community, a few early opinion-makers
can make a rumor go viral, due to people not thinking through how
the vast majority of a viral story's proponents are just following
the herd and possess no private information about the rumor's veracity.

The present study examines the effect of the \emph{observation network}
on the extent of learning, in a setting where players suffer from
\emph{inferential naiveté}. We analyze the theoretical implications
of a tractable log-linear learning rule that aggregates observations
in a manner related to the DeGroot heuristic. Agents who fail to account
for correlations in their observations (as in \citealp{eyster2010naive})
choose actions according to this log-linear rule. We also introduce
\emph{weighted networks} and a weighted version of log-linear behavior,
where agents place different weights on different neighbors' actions.
We show that such decision weights arise when agents additionally
misperceive the precisions of predecessors' signals according to the
strengths of their links to said predecessors.

When combined with a Gaussian informational environment, our weighted
log-linear learning rule lets us compute naive agents' exact probability
of taking the correct action (``learning accuracy'') on arbitrary
weighted networks. In contrast, the existing literature on social
learning has focused on whether long-run learning outcome exhibits
certain properties (e.g., convergence to dogmatic beliefs in the wrong
state) with positive probability, but not on how these probabilities
vary across environments. In settings where learning is imperfect
(i.e., society does not almost surely learn the correct state in the
long run), we obtain a richer characterization of the learning outcome
and compute comparative statics of learning accuracy with respect
to network structures.

Under our weighted log-linear learning rule, actions take a simple
form: given a continuous action space and a binary state space, we
can express each agent's action as a log-linear function of her predecessors'
private signal realizations with coefficients depending on the weighted
network structure. We exploit this expression to develop a necessary
and sufficient condition for society to learn completely: no agent
has too much ``influence.'' Imperfect learning is the leading case:
some agent in the network must have disproportionate influence whenever
all but finitely many people observe more than one neighbor. Since
this condition applies to a very broad class of networks, our analysis
focuses on comparing differentially inefficient learning outcomes
across networks. The detailed comparative statics we obtain from this
approach are crucial: imperfect learning implies a wide range of welfare
losses on different networks.

Introducing naiveté generates predictions matching empirical observations
in several domains where the rational model's implications are unclear
or counterfactual. We prove that increasing network density leads
to more inaccurate social-learning outcomes in the long run. This
prediction is supported by the experimental results in our companion
paper \citep{dasaratha2019experiment}, where we find human subjects'
accuracy gain from social learning is twice as large on sparse networks
compared to dense networks. As another example, disagreement among
different communities is common in practice. In the domain of product
adoption, respective subcommunities frequently insist upon the superiority
of their preferred products. We prove that if agents' actions only
coarsely reflect their beliefs and society is partially segregated,
then two social subgroups can disagree forever.\footnote{Segregation is extensively documented in social networks (see \citealp*{mcpherson2001birds}
for a survey).} Because of the limited information conveyed by actions, disagreement
can persist even when agents observe the actions of unboundedly many
individuals from another group. This presents a sharp contrast with
Bayesian social-learning models (and other leading learning models
such as DeGroot learning), where asymptotic agreement is a robust
prediction.

\subsection{Related Literature}

\subsubsection{Effect of Network Structure on Learning}

Much of the literature on how network structure matters for learning
outcomes has focused on networked agents repeatedly guessing a state
while learning from the same set of neighbors each period (e.g., \citet{bala1998learning}).

The leading behavioral model here is the DeGroot heuristic, which
forms a belief in each period by averaging the beliefs of neighbors
in the previous period. A key prediction of DeGroot learning is that
society converges to a long-run consensus \citep*{demarzo2003persuasion},
which will be correct in large networks as long as the network is
not too unbalanced \citep{golub2010naive}. Much of the analysis focuses
on how network structure (e.g., homophily) matters for the speed of
convergence to correct consensus beliefs \citep{golub2012homophily}.
We find that in a sequential setting, natural changes in network structure
matter for asymptotic accuracy, not only for speed of learning. Changing
network density, which has no effect on DeGroot learning in large
networks, can substantially alter the probability that a society learns
correctly. One intuition for this difference is that DeGroot agents
assign weights adding up to 1 to their neighbors, but agents in our
setting have increasing out-degrees with increasing network density
and therefore can overweight their social information. Homophily also
matters for this probability and even for whether consensus is ever
reached.

While DeGroot proposes the averaging rule as an ad hoc heuristic,
several recent papers have developed behavioral microfoundations for
learning in the repeated-interaction setting \citep*{molavi16,mueller2019general,levy2017information}.
These models closely resemble ours at the level of individual behavior,
but their predictions about society's long-run beliefs are more in
line with DeGroot. As such, changes in network structure again have
a limited scope for affecting learning outcomes in this literature.

\subsubsection{Sequential Social Learning}

We consider the same environment as the extensive literature on sequential
social learning beginning with \citet{banerjee1992simple} and \citet*{bikhchandani1992theory}.
\citet*{acemoglu2011bayesian} and \citet{lobel2015information} characterize
network features that lead to correct asymptotic learning for Bayesians
who move sequentially. By providing a thorough understanding of rational
learning in sequential settings, this literature provides a valuable
benchmark as we study naive learning. We find that among network structures
where Bayesian agents learn asymptotically, there is large variation
in the probability of mislearning for naive agents.

Several authors look at sequential behavioral learning on a particular
network structure, usually the complete network \citep{eyster2010naive,bohren2016informational,bohren2017bounded}.
We characterize several ways in which the choice of network structure
matters for the distribution of long-run outcomes. \citet{eyster2014extensive}
exhibit a general class of social-learning rules, which includes the
weighted log-linear rule we study for certain values of the weights,
where mislearning occurs with positive probability. We go beyond this
general result by deriving expressions for the exact probabilities
of mislearning on different networks, whose associated welfare losses
cannot be compared using the binary classification of \citet{eyster2014extensive}.

\section{Model\label{sec:Model}}

\subsection{Sequential Social Learning on a Weighted Network}

There are two possible states of the world, $\omega\in\{0,1\}$, both
equally likely. There is an infinite sequence of agents indexed by
$i\in\mathbf{\mathbb{N}}$. Agents move in order, each acting once.

On her turn, agent $i$ observes a private signal $s_{i}\in\mathbb{R}$.
Private signals $(s_{i})$ are Gaussian and independent and identically
distributed(i.i.d.) given the state. When $\omega=1$, we have $s_{i}\sim\mathcal{N}(1,\sigma^{2})$
for some conditional variance $\sigma^{2}>0$. When $\omega=0$, we
have $s_{i}\sim\mathcal{N}(-1,\sigma^{2})$.

In addition to her private signal, agent $i$ also observes the actions
of previous agents. Then, $i$ chooses an action $a_{i}\in[0,1]$.
In our microfoundation for Definition \ref{def:log_linear}, agent
$i$ chooses $a_{i}$ to maximize the expectation of 
\[
u_{i}(a_{i},\omega)\coloneqq-(a_{i}-\omega)^{2}
\]
given her belief about $\omega$, which we describe later. So her
chosen action corresponds to the probability she assigns to the event
$\{\omega=1\}$.

We find it convenient to work with the following change of variables.
\begin{notation}
We have $\tilde{s}_{i}\coloneqq\ln\left(\frac{\mathbb{P}[\omega=1|s_{i}]}{\mathbb{P}[\omega=0|s_{i}]}\right)$
and $\tilde{a}_{i}\coloneqq\ln\left(\frac{a_{i}}{1-a_{i}}\right)$
\end{notation}
In words, $\tilde{s}_{i}$ is the log-likelihood ratio of the events
$\{\omega=1\}$ and $\{\omega=0\}$ given signal $s_{i}$, while it
is easy to show that $\tilde{a}_{i}$ is the log-likelihood ratio
of $\{\omega=1\}$ and $\{\omega=0\}$ corresponding to action $a_{i}$.
That is to say, if $a_{i}$ is optimal given $i$'s beliefs, then
$\tilde{a}_{i}$ is the log-likelihood ratio of $\{\omega=1\}$ and
$\{\omega=0\}$ according to $i$'s beliefs. Note that the transformations
from $s_{i}$ to $\tilde{s}_{i}$ and from $a_{i}$ to $\tilde{a}_{i}$
are bijective, so no information is lost when we relabel variables.

Agents are linked to all of their predecessors on a weighted network,
with a lower-triangular adjacency matrix $M$ where all diagonal entries
are equal to 0. For $i>j,$ the weight of the link from $i$ to $j$
is given by $M_{i,j}\in[0,1]$. The weights of the edges determine
the relative importance agents place on others' actions in forming
their beliefs. In Section \ref{sec:Applications}, we derive comparative
statics with respect to the network structure. Because studying continuous
changes in the network is more tractable than discrete changes, we
consider a model that allows interior network weights.

Throughout, we study \emph{naive agents} who choose actions equal
to a weighted sum of their observations according to the following
log-linear updating rule.
\begin{defn}
\label{def:log_linear} Agents use the \emph{weighted log-linear rule
}if each agent $i$ plays 
\[
\tilde{a}_{i}=\tilde{s}_{i}+\sum_{j<i}M_{i,j}\tilde{a}_{j}.
\]
\end{defn}
In words, each agent $i$'s log action is a weighted sum of her predecessors'
log actions and her own log signal. The network $M$ exogenously determines
the relative influences of different predecessors' behavior on $i$'s
play, with $j$'s influence proportional to the strength of $i$'s
social connection to her. By contrast, a society of rational agents
would put endogenous weights on others' actions that are not simply
proportional to strengths of the network links between them. For example,
if agents only observe the actions of linked neighbors, rational agents
would play the unique perfect Bayesian equilibrium of the social-learning
game, in which case some equilibrium decision weights may be negative.\footnote{We omit the proof that actions are log linear at the perfect Bayesian
equilibrium. This can be shown by induction, and the key step is a
calculation similar to Lemma \ref{lem:llh_normal}.}
\begin{rem}
\label{rem:DeGrootEmbedding}The formula in Definition \ref{def:log_linear}
resembles the DeGroot updating rule. A key distinction is that we
allow for agents to have any out-degree, while the DeGroot heuristic
requires all agents' weights sum to 1. In an unweighted network, any
agent with multiple observations has an out-degree greater than 1.
This distinction is not just a normalization, but is, in fact, the
source of redundancy under naive inference.
\end{rem}

\subsection{\label{subsec:Naive-Inference-Assumption}Microfoundation for Weighted
Log-Linear Rule}

In this subsection, we provide a psychological microfoundation for
the weighted log-linear rule from Definition \ref{def:log_linear}.
We first show that on \emph{unweighted networks} (i.e., when each
$M_{i,j}$ is either 0 or 1), this rule follows from a primitive assumption
about agents' inference.

A growing body of recent evidence in psychology and economics shows
that people learning from peers are often not fully correct in their
treatment of social structure \citep*{chandrasekhar2015testing,enke2016correlation}.
Instead of calculating the optimal Bayesian behavior that fully takes
into account all they know about the network and signal structure,
agents often apply heuristic simplifications to their environment.
When networks are complicated and/or uncertain, determining Bayesian
behavior can be intractable \citep*{jadbabaie2017bayesian} and these
heuristic learning rules become especially prevalent \citep*{eyster2015experiment}.
Motivated by this observation, we consider the following behavioral
assumption.
\begin{assumption}
\label{assu:behavioral}Each agent wrongly believes that each predecessor
chooses an action to maximize her expected payoff based only on her
private signal, and not on her observation of other agents.
\end{assumption}
This inferential mistake can be equivalently described as agent $i$
misperceiving $M_{j,k}=0$ for all $k<j<i$. Under this interpretation,
$i$ acts as if her neighbors do not take into account their own predecessors'
actions.

In the sequential-learning literature, Assumption \ref{assu:behavioral}
was first studied on the complete network by \citet{eyster2010naive},
who coined the term ``best-response trailing naive inference'' (BRTNI)
to describe this behavior. The laboratory games in \citet*{eyster2015experiment}
and \citet*{mueller2015general} find evidence for this behavioral
assumption.

Agents who make this inferential mistake use the log-linear rule on
unweighted networks, providing a psychological microfoundation for
the behavior we study.
\begin{lem}
\label{lem:On-an-unweighted}On an unweighted network where agents
observe only the actions of linked predecessors, Assumption \ref{assu:behavioral}
implies agents use the weighted log-linear rule.
\end{lem}
Due to the inferential mistake, agent $i$ wrongly infers that $j$'s
log action equals her log signal. (This inference is possible since
the continuum action set is rich enough to exactly reveal beliefs
of predecessors.) The action $a_{i}$ is the product of the relevant
likelihoods because an agent satisfying Assumption \ref{assu:behavioral}
thinks her observations are based on independent information, and
therefore $\tilde{a}_{i}$ is the sum of the corresponding log-likelihood
ratios.
\begin{rem}
Inference under Assumption \ref{assu:behavioral} is cognitively simple
in that it does not rely on agents' knowledge about the network (beyond
their own neighborhoods) or even knowledge about the order in which
their predecessors moved. Our model therefore applies even to complex
environments with random arrival of agents. In such environments,
Assumption \ref{assu:behavioral} may be more realistic than assuming
full knowledge about the observation structure and move order.
\end{rem}
Next we give a microfoundation for the same behavior on weighted networks.
We provide an interpretation of network weights that formalizes the
idea that agents place more trust in neighbors with whom their connections
are stronger: we suppose that agents underestimate the precision of
others' private signals in a way that depends on $M_{i,j}$.
\begin{assumption}
\label{assu:wrong_precision} Given network weight $M_{i,j}\in[0,1],$
agent $i$ believes $j$'s private signal has conditional variance
$\sigma^{2}/M_{i,j}$ given the state.
\end{assumption}
\citet{weizsacker2010we}'s meta-analysis of sequential social-learning
experiments finds that laboratory subjects underuse social information
relative to their own private signals. Our Assumption \ref{assu:wrong_precision}
is consistent with this evidence, but also allows for different degrees
of underuse for different predecessors. Weaker network connections
formally correspond to predecessors whose signals are believed to
be less informative about the state or less relevant. Conversely,
if we know that $i$ acts as if $j$'s signal has conditional variance
$V_{i,j}\ge\sigma^{2}$, then we can construct a weighted network
with weights $M_{i,j}=\sigma^{2}/V_{i,j}$.

The next result shows the combination of the inferential mistake about
others' social information and the underestimation of others' signal
precisions (Assumptions \ref{assu:behavioral} and \ref{assu:wrong_precision})
provides a microfoundation for the weighted log-linear rule.
\begin{lem}
\label{lem:precision}Agents who satisfy Assumptions \ref{assu:behavioral}
and \ref{assu:wrong_precision} use the weighted log-linear rule.
\end{lem}
By a property of the Gaussian distribution, a log-transformed Gaussian
variable is also Gaussian, which is the key to showing the above lemma.

\subsection{Complete Learning and Mislearning}

We define what it means for society to learn completely in terms of
convergence of actions.
\begin{defn}
\emph{\label{def:Society-learns-correctly}}Society\emph{ learns completely}
if $(a_{n})$ converges to $\omega$ in probability.
\end{defn}
Since $a_{n}$ reflects agent $n$'s belief in $\{\omega=1\}$ in
our microfoundation of weighted log-linear inference, this definition
also describes a property about the convergence of beliefs.\footnote{In general, we treat $a_{n}$ as the belief of a naive agent who plays
$\tilde{a}_{n}$.} In a setting where society learns completely, agent $n$ becomes
very likely to believe strongly in the true state of the world as
$n$ grows large.

One failure of complete learning is when society becomes fully convinced
of the wrong state of the world with positive probability, an event
we call mislearning.
\begin{defn}
Society \emph{mislearns} when $\lim_{n\to\infty}a_{n}=0$ but $\omega=1$
or when $\lim_{n\to\infty}a_{n}=1$ but $\omega=0$.
\end{defn}
\begin{rem}
\label{rem:non-convergence} Mislearning is not the only obstacle
to complete learning. Consider a network where, for $i\ge3,$ we have
$M_{i,1}=M_{i,2}=1$ and $M_{i,j}=0$ for all $j\ne1,2$. Clearly
this society neither learns completely nor mislearns with positive
probability. Instead, agents' beliefs almost surely do not converge.
\end{rem}

\section{Social Influence and Learning\label{sec:Results}}

In this section, we develop a necessary and sufficient condition on
the network for society to mislearn. We argue that this condition
is satisfied by a large class of networks of economic relevance.

\subsection{\label{subsec:Path-Counting-Interpretation-of}Path-Counting Interpretation
of Actions}

We now show that with naive agents, actions have a simple (log-)linear
expression in terms of paths in the network. Unlike the expression
in Definition \ref{def:log_linear}, this next result expresses actions
in terms of only signal realizations and the network structure, making
no reference to predecessors' actions.

Let $M[n]$ refer to the $n\times n$ upper-left submatrix of $M$.
\begin{prop}
\label{prop:representation}Consider any weighted network $M$. For
each $n$, the actions of the first $n$ agents are determined by
\[
\left(\begin{array}{c}
\tilde{a}_{1}\\
\vdots\\
\tilde{a}_{n}
\end{array}\right)=(I-M[n])^{-1}\cdot\left(\begin{array}{c}
\tilde{s}_{1}\\
\vdots\\
\tilde{s}_{n}
\end{array}\right).
\]
 So, $\tilde{a}_{i}$ is a linear combination of $(\tilde{s}_{j})_{j=1}^{i}$,
with coefficients given by the number of weighted paths from $i$
to $j$ in the network with adjacency matrix $M$.
\end{prop}
From a combinatorial perspective, the formula says that the influence
of $j$'s signal on $i$'s action depends on the number of weighted
paths from $i$ to $j$. In unweighted networks where all entries
in $M$ are 1 or 0, this is just the number of paths. In general,
``weighted paths'' means the path passing through agents $i_{0},...,i_{K}$
is counted with weight $\prod_{k=0}^{K-1}M_{i_{k},i_{k+1}}$.

Our Proposition \ref{prop:representation} resembles a formula for
agents' actions in \citet{levy2017information}. In a setting of repeated
interaction with a fixed set of neighbors instead of sequential social
learning, \citet{levy2017information} also find that the influence
of $i$'s private information on $j$'s period $t$ posterior belief
depends on the number of length-$t$ paths from $i$ to $j$ in the
network.

\subsection{Condition for Complete Learning}

We now use the representation result of Proposition \ref{prop:representation}
to study which networks lead to mislearning by naive agents.

We define below a notion of network influence for the sequential social-learning
environment, which plays a central role in determining whether society
learns completely in the long run.
\begin{defn}
Let $b_{i,j}\coloneqq(I-M[i])_{i,j}^{-1}$ be the number of weighted
paths from $i$ to $j$ in network $M$.
\end{defn}
Because of Proposition \ref{prop:representation}, these path counts
are important to our analysis.
\begin{defn}
For $n>i,$ the \emph{influence} of $i$ on $n$ is $\mathbb{I}(i\to n)\coloneqq b_{n,i}/\sum_{j=1}^{n}b_{n,j}$.
\end{defn}
That is to say, the influence of $i$ on $n$ is the fraction of paths
from $n$ that end at $i$.

A different definition of influence appears in \citet{golub2010naive},
who study DeGroot learning in a network where agents act simultaneously
each period. For them, the influence of an agent $i$ is determined
by the unit left eigenvector of the belief-updating matrix, which
is proportional to $i$'s degree in an undirected network with symmetric
weights. Both definitions are related to the proportion of walks terminating
at an agent, but because of the asymmetry between earlier and later
agents in the sequential setting, the distribution of walks tends
to be more unbalanced.

For Proposition \ref{prop:influence_correct_learning} only, we consider
networks that satisfy the following connectedness condition.
\begin{defn}
Network $M$ satisfies the \emph{connectedness condition} if there
is an integer $N$ and constant $C>0$ such that for all $i>N$, there
exists $j<N$ with $b_{i,j}\ge C$.
\end{defn}
Intuitively, this says that all sufficiently late agents are indirectly
influenced by some early agent. An unweighted network satisfies the
connectedness condition if and only if there are only finitely many
agents who have no neighbors. If such a network violates the connectedness
condition, then clearly the infinitely many agents without neighbors
will prevent society from learning completely. All weighted networks
studied in Section \ref{sec:Applications} also satisfy the connectedness
condition.
\begin{prop}
\label{prop:influence_correct_learning}Consider any weighted network
satisfying the connectedness condition. Society learns completely
if and only if $\lim_{n\to\infty}\mathbb{I}(i\to n)=0$ for all i.
\end{prop}
Proposition \ref{prop:influence_correct_learning} says that beliefs
always converge to the truth if and only if no agent has undue influence
in the network. This is a recurring insight in research on social
learning on networks, beginning with the ``royal family'' example
and related results in \citet{bala1998learning}. Other examples where
excessive influence hinders social learning in networks include \citet{golub2010naive},
\citet*{acemoglu2010spread}, and \citet*{mossel2015strategic}. The
main contribution of Proposition \ref{prop:influence_correct_learning}
is to identify the relevant measure of influence in our sequential-learning
setting with naive agents. In our setting, unlike on large unordered
networks as in \citet{golub2010naive}, the ordering of agents creates
an asymmetry that prevents society from learning completely on most
natural networks. Early movers influence many successors, an asymmetry
unique to the sequential-learning setting. For instance, the results
of \citet{golub2010naive} imply that when agents move simultaneously
and every agent weights every other agent equally, society converges
to complete learning as the size of the network grows. But as we show
in Section \ref{subsec:Uniform-Weights}, society does not learn completely
in the uniform weighted network where each agent is connected to every
predecessor with the same weight.

The idea behind the proof is that if there were some $i$ and $\epsilon>0$
such that $\mathbb{I}(i\to n)>\epsilon$ for infinitely many $n$,
then $i$ exerts at least $\epsilon$ influence on all these future
players. Since $\tilde{s}_{i}$ is unbounded, there is a rare but
positive probability event where $i$ gets such a strong but wrong
private signal so that any future player who puts $\epsilon$ weight
on $\tilde{s}_{i}$ and $(1-\epsilon)$ weight on other signals would
come to believe in the wrong state of the world with high probability.
But this would mean infinitely many players have a high probability
of believing in the wrong state of the world, so society fails to
learn completely. To gain an intuition for the converse, first observe
that $\tilde{a}_{n}=\|\vec{b}_{n}\|_{1}\sum_{i=1}^{n}\mathbb{I}(i\to n)\tilde{s}_{i}$.
In the event that $\omega=1$, the mean of $\tilde{a}_{n}$ converges
to infinity with $n$. So, provided the variance of $\tilde{a}_{n}$
is small relative to its mean, $\tilde{a}_{n}$ will converge to infinity
in probability and society will learn completely. Since the log signals
$(\tilde{s}_{i})$ are i.i.d$.$, the variance of $\tilde{a}_{n}$
is small relative to its mean precisely when all of the weights $\mathbb{I}(i\to n)$
in the summand are small; this is guaranteed by the condition on influence
$\lim_{n\to\infty}\mathbb{I}(i\to n)=0$.

We now argue, both analytically and through an example, that the condition
for complete learning in Proposition \ref{prop:influence_correct_learning}
is violated by a large class of weighted networks. The \emph{out-degree}
of an agent $i$ is $\sum_{j<i}M_{i,j}$, interpreted as the total
number of neighbors who directly affect $i$'s play. We first show
that on any network where all but finitely many agents have out-degree
at least $1+\epsilon$ for some $\epsilon>0$, complete learning fails.
\begin{prop}
\label{prop:one_plus_epsilon}Suppose there exists $\epsilon>0$ so
that $\sum_{j<i}M_{i,j}\ge1+\epsilon$ for all except finitely many
agents $i.$ Then society does not learn completely.
\end{prop}
The proof establishes that such a network satisfies the connectedness
condition, but the influence of at least one of the early agents does
not converge to 0, so complete learning fails by Proposition \ref{prop:influence_correct_learning}.
The intuition is that if an influential later agent has an out-degree
greater than 1, then the earlier agents whose action indirectly affects
her must have even more influence. We construct a correspondence between
weighted paths ending at early agents and weighted paths ending at
later agents.

The condition in Proposition \ref{prop:one_plus_epsilon} is satisfied
by all of the weighted networks studied in Section \ref{sec:Applications},
which all feature mislearning with positive probability. The network
in Remark \ref{rem:non-convergence} also satisfies the condition
in Proposition \ref{prop:one_plus_epsilon} and almost surely leads
to nonconvergence of beliefs.

As an additional example, consider a network where network weights
decay exponentially in distance, so $M_{i,j}=\delta^{i-j}$ for some
$\delta\ge0$. When the rate of decay is strictly above the threshold
of $\frac{1}{2}$, late enough agents have out-degree bounded away
from 1, so society does not learn completely by Proposition \ref{prop:one_plus_epsilon}.
When the rate of decay is strictly below the same threshold, we can
show the connectedness condition fails and agents' beliefs do not
converge to $\omega$ due to lack of information. At the threshold
value of $\delta=\frac{1}{2}$, private signals of all predecessors
are given equal weight, so the law of large numbers implies complete
learning. This highlights the fragility of complete learning in our
model.
\begin{example}
\label{exa:decay}Suppose $M_{i,j}=\delta^{i-j}$ for some $\delta\ge0$.
Society learns completely if and only if $\delta=\frac{1}{2}$.
\end{example}
Details of the arguments are provided in the Appendix.

\section{Probability of Mislearning and Network Structure\label{sec:Applications}}

In this section, we compare the probability of mislearning in networks
where complete learning fails by Proposition \ref{prop:influence_correct_learning}.
To do so, we first derive a formula for the probability of mislearning
as a function of the observation network. Then, applying this expression
to several canonical network structures, we compute comparative statics
of this probability with respect to network parameters.

The first network structure we consider assigns the same weight to
each link. Next, we study a homophilic network structure with agents
split into two groups, allowing different weights on links within
groups and between groups.

\subsection{Probability of Mislearning}

Due to the Gaussian signal structure, we can give explicit expressions
for the distributions of agent actions in each period. We show that
the probability that agent $n$ is correct about the state is related
to the ratio of $\ell_{1}$ norm to $\ell_{2}$ norm of the vector
of weighted path counts to $n$'s predecessors, $\vec{b}_{n}\coloneqq(b_{n,1},...,b_{n,n})$.
The ratio $\|\vec{b}_{n}\|_{1}/\|\vec{b}_{n}\|_{2}$ can be viewed
as a measure of distributional equality for the vector of weights
$\vec{b}_{n}$.\footnote{In fact, the ratio of $\ell_{1}$ to $\ell_{2}$ norm has been used
in the applied mathematics literature as a measure of normalized sparsity.} Indeed, among positive $n$-dimensional vectors $\vec{b}_{n}$ with
$\|\vec{b}_{n}\|_{1}=1,$ the $\ell_{1}/\ell_{2}$ ratio is minimized
by the vector $\vec{b}_{n}=(1,0,...,0)$ and maximized by the vector
$\vec{b}_{n}=(\frac{1}{n},\frac{1}{n},...,\frac{1}{n})$.

We can express in terms of the network structure the ex ante probability
that agent $n$ puts more confidence in the state being $\omega=1$
when this is, in fact, true. This gives the key result that later
lets us compare the probabilities of mislearning on different networks.
\begin{prop}
\label{prop:prob_mislearn} On any network, the probability that agent
$n$ thinks the correct state is more likely than the incorrect one
is 
\[
\mathbb{P}[\tilde{a}_{n}>0\mid\omega=1]=\Phi\left(\frac{1}{\sigma}\cdot\frac{\parallel\vec{b}_{n}\parallel_{1}}{\parallel\vec{b}_{n}\parallel_{2}}\right),
\]
where $\Phi$ is the standard Gaussian distribution function.
\end{prop}
As $\frac{\parallel\vec{b}_{n}\parallel_{1}}{\parallel\vec{b}_{n}\parallel_{2}}$
increases, the probability of agent $n$ playing higher actions in
state $\omega=1$ also increases. In other words, the agent is more
likely to be correct about the state when the vector of path counts
is more evenly distributed. This should make intuitive sense as she
is more likely to be correct when her action is the average of many
independent signals with roughly equal weights, and less likely to
be correct when her action puts disproportionally heavy weights on
a few signals.

The proof of Proposition \ref{prop:prob_mislearn} first expresses
$\tilde{a}_{n}=\sum_{i=1}^{n}b_{n,i}\tilde{s}_{i}$ using Proposition
\ref{prop:representation} and then observes that $(s_{i})$ are distributed
i.i.d$.$ $\mathcal{N}(1,\sigma^{2})$ conditional on $\omega=1$.
This means $(\tilde{s}_{i})$ are also conditionally i.i.d$.$ Gaussian
random variables, since the proof of Lemma \ref{lem:precision} establishes
that $\tilde{s}_{i}=2s_{i}/\sigma^{2}.$ As a sum of conditionally
i.i.d$.$ Gaussian random variables, the action $\tilde{a}_{n}$ is
itself Gaussian. The result follows from calculating the mean and
variance of this sum.

For the remainder of this section, we study specific weighted networks
where the ratio $\|\vec{b}_{n}\|_{1}/\|\vec{b}_{n}\|_{2}$ can be
expressed in terms of interpretable network parameters. Our basic
technique is to count paths on a given network using an appropriate
recurrence relation, and then to apply Proposition \ref{prop:prob_mislearn}.
This allows us to relate network parameters to the probability distribution
over learning outcomes.

\subsection{Uniform Weights\label{subsec:Uniform-Weights}}

The simplest network we consider assigns the same weight $q\in[0,1]$
to each feasible link. By varying the value of $q$, we can ask how
link density affects the probability of mislearning, which we now
define.
\begin{prop}
\label{prop:Uniform-Weights}Consider the $q$-uniform weighted network.
When $0<q\le1$, almost surely agents' actions $a_{n}$ converge to
$0$ or $1$. The probability that society mislearns is 
\[
\Phi\left(-\frac{1}{\sigma}\cdot\sqrt{\frac{q+2}{q}}\right).
\]
This probability is strictly increasing in $q$.
\end{prop}
The first statement of the proposition tells us that agents eventually
agree on the state of the world, and that these beliefs are arbitrarily
strong after some time. These consensus beliefs need not be correct,
however. The probability of society converging to incorrect beliefs
is nonzero for all positive $q$, and increases in $q.$ When the
observational network is more densely connected, society is more likely
to be wrong, as in Figure \ref{fig:uniform_figure}.

\begin{figure}[H]
\begin{centering}
\includegraphics[scale=0.7]{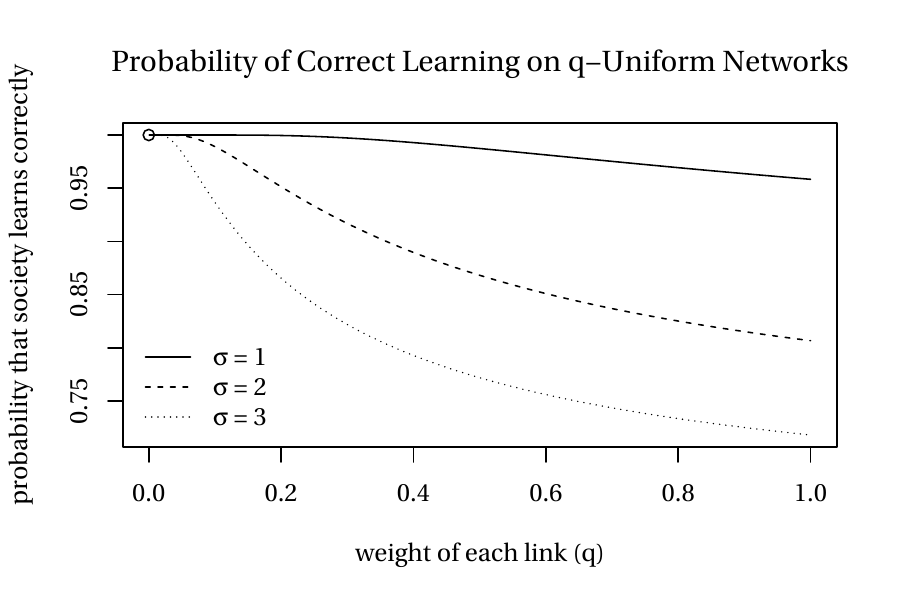}
\par\end{centering}
\caption{\label{fig:uniform_figure}Probability that society learns correctly
($a_{n}\to\omega)$ on a network where all feasible links have weight
$q$. See Remark \ref{rem:discontinuous} for the discontinuity at
$q=0$.}
\end{figure}

When the observation network is sparse (i.e., $q$ is low), early
agents' actions convey a large amount of independent information because
they do not influence each other too much. This facilitates later
agents' learning. For high $q$, early agents' actions are highly
correlated, so later naive agents cannot recover the true state as
easily. A related intuition compares agents' beliefs about network
structure to the actual network: as $q$ grows, agents' beliefs about
the network weights chosen by their neighbors differ more and more
from the true weights. For small $q$, however, underweighting of
social information partially mitigates the error due to Assumption
\ref{assu:behavioral}. To complement this theoretical result, in
a companion paper we conduct a sequential-learning experiment to evaluate
a related comparative static \citep{dasaratha2019experiment}. In
line with the intuition above, we find that human subjects indeed
exhibit lower long-run accuracy in the learning game when the density
of the observation network increases.

The proof relies on the recurrence relation $b_{n,i}=(1+q)b_{n-1,i}$.
To see that this recurrence holds, let $\Psi[n\rightarrow i]$ be
the set of all paths from $n$ to $i$ and let $\Psi[(n-1)\rightarrow i]$
be the set of all paths from $n-1$ to $i$. For each $\psi\in\Psi[(n-1)\rightarrow i]$
passing through agents $(n-1),j_{1},j_{2},...,i$, we associate two
paths $\psi^{'},\psi^{''}\in\Psi[n\rightarrow i]$, with $\psi^{'}$
passing through $n,j_{1},j_{2},...i$ and $\psi^{''}$ passing through
$n,(n-1),j_{1},j_{2},...,i$. This association exhaustively enumerates
all paths in $\Psi[n\rightarrow i]$ as we consider all $\psi\in\Psi[(n-1)\rightarrow i]$.
Path $\psi^{'}$ has the same weight as $\psi$ since they have the
same length, while path $\psi^{''}$ has $q$ fraction of the weight
of $\psi$ since it is longer by 1. This shows that the weight of
all paths in $\Psi[n\rightarrow i]$ is equal to $1+q$ times the
weight of all paths in $\Psi[(n-1)\rightarrow i]$; hence, $b_{n,i}=(1+q)b_{n-1,i}$.
\begin{rem}
The case $q=1$ is studied in \citet{eyster2010naive}, who use a
slightly different signal structure. In their setting, \citet{eyster2010naive}
show that agents' beliefs converge to $0$ or $1$ almost surely and
derive a nonzero lower bound on the probability of converging to the
incorrect belief. By contrast, our result gives the exact probability
of converging to the wrong belief for any $0<q\le1$, under a Gaussian
signal structure.
\begin{rem}
\label{rem:discontinuous}There is a discontinuity at $q=0$. As $q$
approaches $0$, the probability of society eventually learning correctly
approaches $1$. But when $q=0$, each agent uses only her own private
signal, so there is no social learning. This nonconvergence of actions
means that society never learns correctly. 
\end{rem}
\end{rem}
While we have focused on long-run learning accuracy, there is a trade-off
between the speed of convergence and asymptotic accuracy for naive
agents. The next proposition illustrates an extreme form of this trade-off.
Start with a uniform-weights network with any link weight $0<q^{*}\le1$.
Sufficiently sparse uniform-weights networks will have worse accuracy
than the $q^{*}$-uniform network for arbitrarily many early agents
due to a lack of information aggregation. However, as implied by Proposition
\ref{prop:Uniform-Weights}, late enough agents will have higher accuracy
on these very sparse networks than on the $q^{*}$-uniform network.
\begin{prop}
\label{prop:very_sparse}For any $0<q^{*}\le1$ and $N\in\mathbb{N},$
there exists some $\bar{q}\in(0,q^{*})$ so that $\mathbb{P}[\tilde{a}_{n}>0\mid\omega=1]$
is strictly larger on the $q^{*}$-uniform weights network than the
$q$-uniform weights network for all $2\le n\le N$ and $q\in(0,\bar{q}).$
\end{prop}

\subsection{Two Groups\label{subsec:Two-Groups}}

We next consider a network with two groups and different weights for
links within groups and between groups. By varying the link weights,
we will consider how homophily (i.e., segregation in communication)
changes learning outcomes.

Odd-numbered agents are in one group and even-numbered agents are
in a second group. Each feasible within-group link has weight $q_{s}$
($s$ for same) and each feasible between-group link has weight $q_{d}$
($d$ for different), so that for $i>j$, the link $M_{i,j}=q_{s}$
if $i\equiv j\ (\text{mod }2)$ and $M_{i,j}=q_{d}$ otherwise. Figure
\ref{fig:two_groups_network_4} illustrates the first four agents
in a two-group network.

\begin{figure}[H]
\begin{centering}
\includegraphics[scale=0.3]{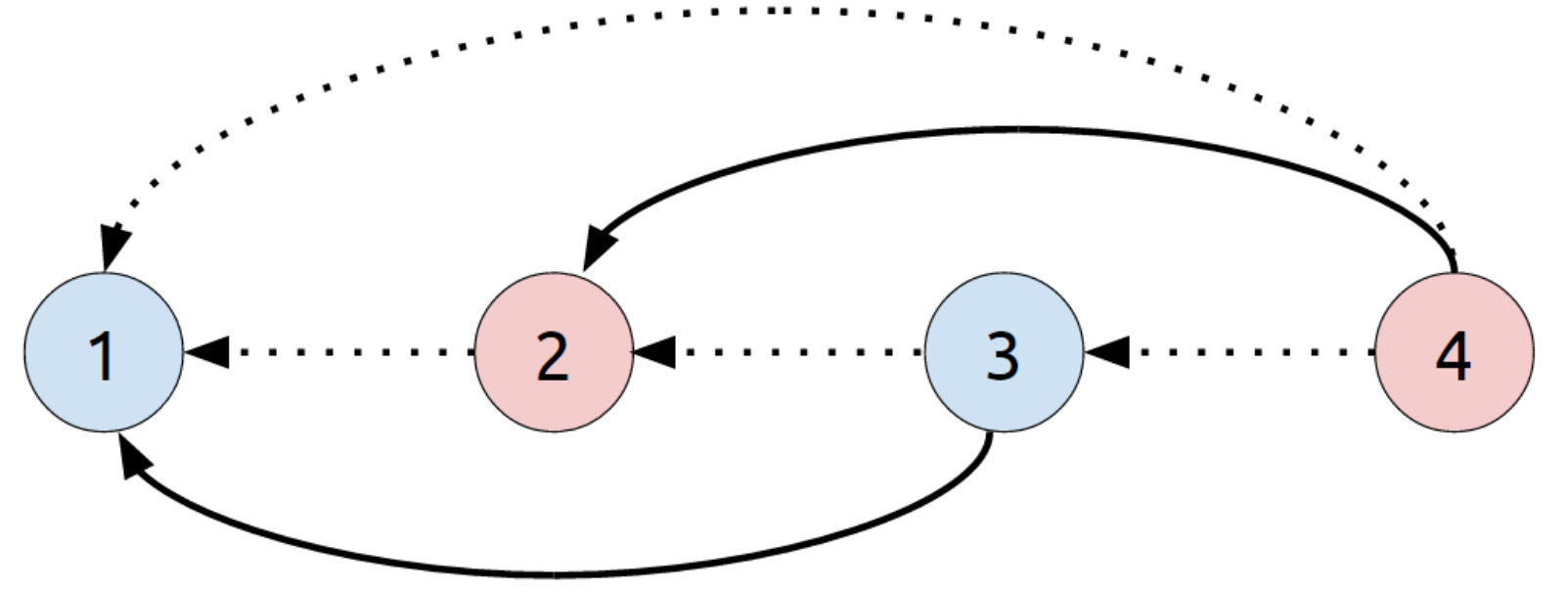}
\par\end{centering}
\caption{\label{fig:two_groups_network_4}First four agents in a two-groups
network. Odd-numbered agents are in one group while even-numbered
agents are in another group. Solid arrows have weight $q_{s}$ and
dashed arrows have weight $q_{d}$.}
\end{figure}

We denote the probability of mislearning with weights $q_{s}$ and
$q_{d}$ as $\xi(q_{s},q_{d})$.
\begin{prop}
\label{prop:twogroups}Consider the two-groups network with within-group
link weight $q_{s}$ and across-group link weight $q_{d}.$ When $0\le q_{s}\le1$
and $0<q_{d}\le1$, almost surely agents' actions $a_{n}$ converge
to $0$ or $1$. The partial derivatives of the mislearning probability
$\xi(q_{s},q_{d})$ satisfy 
\[
\frac{\partial\xi}{\partial q_{d}}>\frac{\partial\xi}{\partial q_{s}}>0,
\]
 i.e., the probability is increasing in $q_{s}$ and $q_{d}$, but
increasing $q_{d}$ has a larger effect than increasing $q_{s}$.
\end{prop}
The first statement again says that agents eventually agree on the
state and eventually have arbitrarily strong beliefs. The fact that
$\xi$ is increasing in $q_{s}$ and $q_{d}$ is another example of
higher link density implying more mislearning. The comparison $\partial\xi/\partial q_{d}>\partial\xi/\partial q_{s}$
tells us that more integrated (i.e., less homophilic) networks are
more likely to herd on the wrong state of the world.

Convergence of beliefs is more subtle with two groups, as we might
imagine the two homophilic groups holding different beliefs asymptotically.
This does not happen because agents have continuous actions that allow
them to precisely convey the strength of their beliefs. As such, eventually
one group will develop sufficiently strong beliefs to convince the
other given any arbitrarily weak connection $q_{d}>0$ between groups.
(In Section \ref{subsec:Disagreement}, however, we show that disagreement
between two homophilic groups is possible with a coarser action space.)

To see that convergence must occur, observe that the belief of a later
agent $n$ depends mostly on the number of paths from that agent to
early agents (and those agents' signal realizations). When $n$ is
large, most paths from agent $n$ to an early agent pass between the
two groups many times. So the number of paths does not depend substantially
on agent $n$'s group. Put another way, when $q_{s}\gg q_{d}>0$ and
$n$ is large, agent $n$ has many more length-1 paths to her own
group than to the other group, but roughly the same total number of
paths across all lengths to both groups. Therefore, agent $n$'s belief
does not depend substantially on whether $n$ is in the odd group
or the even group.\footnote{Each path transitions between the two groups, and eventually the probability
of ending in a given group is approximately independent of the starting
group. This is analogous to a Markov chain approaching its stationary
distribution.}

\citet{coleman1958relational}'s homophily index equals $(q_{s}-q_{d})/(q_{s}+q_{d})$
for this weighted network. To explore how homophily affects mislearning
probability while holding fixed the average degree of each agent,
we consider the total derivative $\frac{d}{d\Delta}\xi(q_{s}+\Delta,q_{d}-\Delta)$.
To interpret, we are considering the marginal effect on mislearning
of a $\Delta$ increase to all the within-group link weights, coupled
with a $\Delta$ decrease to all the between-groups link weights.
These two perturbations, applied simultaneously, leave each agent
with roughly the same total degree and increases the homophily index
by $(2\Delta)/(q_{s}+q_{d})$. Using the chain rule and Proposition
\ref{prop:twogroups},
\[
\frac{d}{d\Delta}\xi(q_{s}+\Delta,q_{d}-\Delta)=\frac{\partial\xi}{\partial q_{s}}-\frac{\partial\xi}{\partial q_{d}}<0,
\]
which means increasing the homophily index of the society and fixing
average degrees always decreases the probability of mislearning. Note
that this result holds regardless of whether society is currently
homophilic ($q_{s}>q_{d})$ or heterophilic ($q_{s}<q_{d})$.

An important insight from the literature about social learning on
networks is that beliefs converge more slowly on more segregated networks
\citep{golub2012homophily}. In our model, faster convergence of beliefs
tends to imply a higher probability of incorrect beliefs. When beliefs
converge quickly, agents are putting far too much weight on early
movers, while when beliefs converge more slowly agents wait for more
independent information. Since agents eventually agree, segregation
helps society form strong beliefs more gradually.

\section{Disagreement\label{subsec:Disagreement}}

In Section \ref{subsec:Two-Groups}, we saw that even on partially
segregated networks, agents eventually reach a consensus on the state
of the world. This agreement relies crucially on the richness of the
action space available to agents, which allows agents to communicate
the strength of their beliefs. In this section, we modify our model
so that the action space is binary and show that the two groups can
disagree forever about the state of the world even when the number
of connections across the groups is unbounded.

The contrasting results for the binary-actions model versus the continuum-actions
model echo a similar contrast in the rational-herding literature,
where society herds on the wrong action with positive probability
when actions coarsely reflect beliefs \citep*{banerjee1992simple,bikhchandani1992theory},
but almost surely converges to the correct action when the action
set is rich enough \citep{lee1993convergence}. Interestingly, while
the rational-herding literature finds that an unboundedly informative
signal structure prevents herding on the wrong action even when actions
coarsely reflect beliefs \citep{smith2000pathological}, we will show
below that even with Gaussian signals two groups may disagree with
positive probability.

Suppose that the state of the world and the signal structure are the
same as in Section \ref{sec:Model}, but agents now choose binary
actions $a_{i}\in\left\{ 0,1\right\} $. Agents still maximize the
expectation of $u_{i}(a_{i},\omega)\coloneqq-(a_{i}-\omega)^{2}$
given their beliefs about $\omega$, under the psychological errors
given by Assumptions \ref{assu:behavioral} and \ref{assu:wrong_precision}.
This utility function now implies that an agent chooses the action
corresponding to the state of world she believes is more likely. Agents
live on the two-groups network from Section \ref{subsec:Two-Groups}:
for $i>j$, the link $M_{i,j}=q_{s}$ if $i\equiv j\ (\text{mod }2)$
and $M_{i,j}=q_{d}$ otherwise. We assume $q_{s}>q_{d}>0$, so that
agents have stronger connections with predecessors from their own
groups.
\begin{prop}
\label{prop:disagreement} Consider the two-groups network. Suppose
$q_{s}>q_{d}>0$ and agents play binary actions. Then there is a positive
probability that all odd-numbered agents choose action $0$ while
all even-numbered agents choose action $1.$
\end{prop}
Persistent disagreement is sustained even though agent $n$ has approximately
$nq_{d}/2$ weighted links to agents from the other group (when $n$
is large) taking opposite actions.

Our result extends to two groups of unequal sizes as long as for all
later agents, the total number of weighted links to their own group
is larger than the total number of weighted links to the other group.

We also get the same result on a random-network analog of the two-groups
model, where edges are unweighted and $q_{s}$ is the probability
of link formation within groups while $q_{d}$ is the probability
of link formation between groups. Agents observe only the actions
of the predecessors to whom they are linked and wrongly believe all
observed actions derive from private signals.\footnote{Details of the statement and a proof are available in a previous draft
at \url{https://arxiv.org/pdf/1703.02105v5.pdf}.} By contrast, with rational agents, Theorem 2 of \citet*{acemoglu2011bayesian}
implies the groups almost surely agree asymptotically on this random
network.

This result adds a new mechanism to the literature on disagreement
in connected societies \citep*{acemoglu2013opinion,acemoglu2016fragility,sethi2012public}.
\citet{bohren2017bounded} also study disagreement in a binary sequential-learning
setting with behavioral agents, but their results concern disagreement
on a complete network among agents with different types of behavioral
biases. By contrast, our Proposition \ref{prop:disagreement} says
that when all agents use the same naive heuristic, they can still
disagree by virtue of belonging to two different homophilic social
groups, even when there are many connections between those groups.

\section{\label{sec:Conclusion}Conclusion}

In this paper, we have explored the influence of network structures
on learning outcomes when agents move sequentially and use a log-linear
learning rule due to inferential naiveté. We have compared long-run
welfare across networks by deriving the exact probabilities of mislearning
on arbitrary networks.

We have studied the simplest possible social-learning environment
to focus on the effect of network structure, but several extensions
are straightforward. Analogs of our general results hold for finite
state spaces with more than two elements, where we can define a log-likelihood
ratio for each pair of states. We can also make the order of moves
random and unknown, in which case naive behavior conditional on a
given turn order is the same as when that order is certain.

We prove our comparative statics results for weighted networks as
they are analytically more tractable than random graphs. For each
weighted network with weights in $[0,1],$ there corresponds an analogous
(unweighted) random graph model where the $i\to j$ link exists with
probability $M_{i,j}$. In numerical simulations, all comparative
statics results proved for weighted networks in Section \ref{sec:Applications}
continue to hold in the analogous random network models. The major
obstacle to extending our proofs is that because our networks are
directed and acyclic, the relevant adjacency matrices have no nonzero
eigenvalues. As a consequence, most techniques from spectral random
graph theory do not apply (but perhaps other methods would).

\newpage{}

\appendix
\begin{center}
\textbf{\Large{}Appendix}{\Large\par}
\par\end{center}

\section{\label{sec:Omitted-Proofs}Proofs}

\subsection{Proof of Lemma \ref{lem:On-an-unweighted}}
\begin{proof}
The log-likelihood ratio of state $\omega=1$ and state $\omega=0$
conditional on the signal realizations of $i$'s linked predecessors
is:

\begin{eqnarray*}
\ln\left(\frac{\mathbb{P}[\omega=1|s_{i},(s_{j})_{j:M_{i,j}=1}]}{\mathbb{P}[\omega=0|s_{i},(s_{j})_{j:M_{i,j}=1}]}\right) & = & \ln\left(\frac{\mathbb{P}[s_{i},(s_{j})_{j:M_{i,j}=1}|\omega=1]}{\mathbb{P}[s_{i},(s_{j})_{j:M_{i,j}=1}|\omega=0]}\right)\text{ (two states equally likely)}\\
 & = & \ln\left(\frac{\mathbb{P}[s_{i}|\omega=1]}{\mathbb{P}[s_{i}|\omega=0]}\cdot\prod_{j:M_{i,j}=1}\frac{\mathbb{P}[s_{j}|\omega=1]}{\mathbb{P}[s_{j}|\omega=0]}\right)\text{ (by independence)}\\
 & = & \ln\left(\frac{\mathbb{P}[\omega=1|s_{i}]}{\mathbb{P}[\omega=0|s_{i}]}\right)+\sum_{j:M_{i,j}=1}\ln\left(\frac{\mathbb{P}[\omega=1|s_{j}]}{\mathbb{P}[\omega=0|s_{j}]}\right)\\
 & = & \tilde{s}_{i}+\sum_{j:M_{i,j}=1}\tilde{s}_{j}\\
 & = & \tilde{s}_{i}+\sum_{j<i}M_{i,j}\tilde{s}_{j}
\end{eqnarray*}

Due to Assumption \ref{assu:behavioral}, $i$ thinks each predecessor
$j$ must have received signal $s_{j}$ such that $\tilde{s}_{j}=\tilde{a}_{j}$.
When $i$ observes only the play of linked predecessors, her log-likelihood
ratio of state $\omega=1$ and state $\omega=0$ given her social
observations and private signal is therefore $\tilde{s}_{i}+\sum_{j<i}M_{i,j}\tilde{a}_{j}$.
She maximizes her expected payoff by choosing an action $a_{i}$ corresponding
to her belief in state $\omega=1,$ which implies that $\tilde{a}_{i}$
is equal to this log-likelihood ratio.
\end{proof}

\subsection{Proof of Lemma \ref{lem:precision}\protect 
}We first establish an auxiliary lemma.
\begin{lem}
\label{lem:llh_normal} We have $\tilde{s}_{i}=2s_{i}/\sigma^{2}$.
\end{lem}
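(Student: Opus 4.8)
The plan is to compute the posterior log-likelihood ratio directly from Bayes' rule, since under Assumption \ref{assu:gaussian_signals} the conditional laws of $s_i$ are explicit Gaussian densities. First I would observe that because the two states are equally likely ex ante, $\mathbb{P}[\omega=1]=\mathbb{P}[\omega=0]=\tfrac12$, the posterior odds equal the likelihood ratio: with $f(\cdot\mid\omega)$ denoting the $\mathcal{N}(2\omega-1,\sigma^2)$ density (so $f(\cdot\mid 1)$ is $\mathcal{N}(1,\sigma^2)$ and $f(\cdot\mid 0)$ is $\mathcal{N}(-1,\sigma^2)$), Bayes' rule gives
\[
\frac{\mathbb{P}[\omega=1\mid s_i]}{\mathbb{P}[\omega=0\mid s_i]}=\frac{\tfrac12\,f(s_i\mid 1)}{\tfrac12\,f(s_i\mid 0)}=\frac{f(s_i\mid 1)}{f(s_i\mid 0)}.
\]
Here I would note that since the signal space is continuous, $\mathbb{P}[\omega=1\mid s_i]$ should be read as the regular conditional probability obtained from densities; nothing beyond the standard Radon--Nikodym interpretation is needed.

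Next I would substitute the Gaussian densities and simplify. Writing out $f(s\mid 1)/f(s\mid 0)=\exp\!\big(-\tfrac{(s-1)^2}{2\sigma^2}+\tfrac{(s+1)^2}{2\sigma^2}\big)$, the quadratic and constant terms cancel and the cross term $(s+1)^2-(s-1)^2=4s$ survives, leaving $\exp(2s/\sigma^2)$. Taking logarithms of both displays then yields $\tilde s_i=\ln\!\big(\mathbb{P}[\omega=1\mid s_i]/\mathbb{P}[\omega=0\mid s_i]\big)=2s_i/\sigma^2$, which is the claim.

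There is essentially no obstacle here: the statement is a one-line consequence of the symmetric-Gaussian structure, and the only thing to be careful about is the bookkeeping of signs in the exponent (which state has mean $+1$ versus $-1$) so that the coefficient comes out $+2/\sigma^2$ rather than $-2/\sigma^2$. It is worth recording this lemma separately because it is exactly what makes the subsequent computations in Section \ref{sec:Applications} tractable: combined with Proposition \ref{prop:representation}, it says each $\tilde a_i$ is an explicit linear combination of the Gaussian variables $s_j$, hence itself Gaussian with mean and variance computable from the path counts $b_{i,j}$.
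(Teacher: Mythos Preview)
Your proof is correct and follows essentially the same approach as the paper: use the equal prior to reduce posterior odds to the likelihood ratio, substitute the Gaussian densities, and observe that the cross term $(s+1)^2-(s-1)^2=4s$ yields $\exp(2s/\sigma^2)$ after the quadratic and constant terms cancel. The paper's version is more terse, omitting your remarks on the regular conditional probability and the sign bookkeeping, but the computation is identical.
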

\begin{proof}
The log-likelihood ratio is 
\begin{align*}
\ln\left(\frac{\mathbb{P}[\omega=1|s_{i}]}{\mathbb{P}[\omega=0|s_{i}]}\right) & =\ln\left(\frac{\mathbb{P}[s_{i}|\omega=1]}{\mathbb{P}[s_{i}|\omega=0]}\right)=\ln\left(\frac{\exp\left(\frac{-(s_{i}-1)^{2}}{2\sigma^{2}}\right)}{\exp\left(\frac{-(s_{i}+1)^{2}}{2\sigma^{2}}\right)}\right)\\
 & =\frac{-(s_{i}^{2}-2s_{i}+1)+(s_{i}^{2}+2s_{i}+1)}{2\sigma^{2}}=2s_{i}/\sigma^{2}.\qedhere
\end{align*}
\end{proof}
We now turn to the proof of Lemma \ref{lem:precision}.
\begin{proof}
Due to Assumptions \ref{assu:behavioral}, $i$ thinks that $j$ will
choose $a_{j}$ such that $\tilde{a}_{j}=2s_{j}/\sigma^{2}$ by the
result just established, since $j$ thinks the conditional variance
of her signal is $\sigma^{2}$. But, since $i$ believes $j$'s signal
has conditional variance $\sigma^{2}/M_{i,j}$ by Assumption \ref{lem:precision},
in $i$'s view 
\[
\ln\left(\frac{\mathbb{P}[\omega=1|s_{j}]}{\mathbb{P}[\omega=0|s_{j}]}\right)=\frac{2s_{j}}{\sigma^{2}/M_{i,j}}=M_{i,j}\tilde{a}_{j},
\]
again applying the result above.

Omitting analogous algebraic arguments as in the proof of Lemma \ref{lem:On-an-unweighted},
\begin{align*}
\ln\left(\frac{\mathbb{P}[\omega=1|s_{i},(s_{j})_{j<i}]}{\mathbb{P}[\omega=0|s_{i},(s_{j})_{j<i}]}\right) & =\ln\left(\frac{\mathbb{P}[\omega=1|s_{i}]}{\mathbb{P}[\omega=0|s_{i}]}\right)+\sum_{j<i}\ln\left(\frac{\mathbb{P}[\omega=1|s_{j}]}{\mathbb{P}[\omega=0|s_{j}]}\right)\\
 & =\tilde{s}_{i}+\sum_{j<i}M_{i,j}\tilde{a}_{j}.
\end{align*}
So $\tilde{s}_{i}+\sum_{j<i}M_{i,j}\tilde{a}_{j}$ is $i$'s log-likelihood
ratio of state $\omega=1$ and state $\omega=0$ given her social
observations and private signal.
\end{proof}

\subsection{Proof of Proposition \ref{prop:representation}}
\begin{proof}
By weighted log-linear inference, for each $i$ we have $\tilde{a}_{i}=\tilde{s}_{i}+\sum_{j\in N_{i}}M_{i,j}\tilde{a}_{j}$.
In vector notation, we therefore have 
\[
\left(\begin{array}{c}
\tilde{a}_{1}\\
\vdots\\
\tilde{a}_{n}
\end{array}\right)=\left(\begin{array}{c}
\tilde{s}_{1}\\
\vdots\\
\tilde{s}_{n}
\end{array}\right)+M[n]\cdot\left(\begin{array}{c}
\tilde{a}_{1}\\
\vdots\\
\tilde{a}_{n}
\end{array}\right)
\]

Algebra then yields the desired expression. Note that $(I-M[n])$
is invertible because $M[n]$ is lower triangular with all diagonal
entries equal to 0.

To see the path-counting interpretation, write $(I-M[n])^{-1}=\sum_{k=0}^{\infty}M[n]^{k}$.
Here, $(M[n]^{k})_{i,j}$ counts the number of weighted paths of length
$k$ from $i$ to $j$.
\end{proof}

\subsection{Proof of Proposition \ref{prop:influence_correct_learning}}
\begin{proof}
Without loss of generality, assume $\omega=1$. (The case of $\omega=0$
is exactly analogous and is omitted.) Note that $a_{n}$ converges
in probability to 1 if and only if $\tilde{a}_{n}$ converges in probability
to $\infty$.

First suppose that $\lim_{n\rightarrow\infty}\mathbb{I}(j\to n)\neq0$
for some $j$. Then there exists $\epsilon>0$ such that $\mathbb{I}(j\to n)>\epsilon$
for infinitely many $n$. For each such $n$, the probability that
agent $n$ chooses an action with $\tilde{a}_{n}<0$ is equal to the
probability that $\sum_{i=1}^{n}\mathbb{I}(i\to n)\tilde{s}_{i}$
is negative.

Because $s$ is Gaussian, $\tilde{s}$ has finite variance, so we
can find positive constants $C$ and $\delta$ independent of $n$
such that $\sum_{i\neq j}\mathbb{I}(i\to n)\tilde{s}_{i}<C$ with
probability at least $\delta$ (for example, by applying Markov's
inequality to $|\tilde{s}_{i}|$). Then agent $n$ will be wrong if
$\tilde{s}_{j}<-C/\epsilon$, which is a positive probability event
since $\tilde{s}$ is unbounded. So the probability that an agent
$n$ such that $\mathbb{I}(j\to n)>\epsilon$ chooses $\tilde{a}_{n}<0$
is bounded from below by a positive constant.

For the converse, suppose that $\lim_{n\rightarrow\infty}\mathbb{I}(i\to n)=0$
for all $i$. By the independence of the log signals $\tilde{s}_{i}$,
the log action $\tilde{a}_{n}=\sum_{i=1}^{n}\mathbb{I}(i\to n)\tilde{s}_{i}$
is a random variable with mean $\|\vec{b}_{n}\|_{1}$ and standard
deviation $\|\vec{b}_{n}\|_{2}\sigma$ when $\omega=1$. We now use
the connectedness condition to show that $\|\vec{b}_{n}\|_{1}/\|\vec{b}_{n}\|_{2}\to\infty$.

Find $N$ and $C\le1$ as in the connectedness condition. For each
$\epsilon>0$, we can choose $M_{\epsilon}$ such that $\mathbb{I}(i\to n)<\epsilon$
whenever $i<N$ and $n>M_{\epsilon}$ by the hypothesis $\lim_{n\rightarrow\infty}\mathbb{I}(i\to n)=0$
applied to the finitely many members $i<N$. But now for any $j\ge N$
and any $n>\max(j,M_{\epsilon}),$ concatenating a path from $j$
to $n$ with a path from $i$ to $j$ gives a path from $i$ to $n$
whose weight is the product of the weights of the two subpaths. This
shows $b_{n,i}\ge b_{j,i}\cdot b_{n,j}$, which implies $\mathbb{I}(i\to n)\ge\mathbb{I}(j\to n)\cdot b_{j,i}$.
We have $\mathbb{I}(j\to n)\le\min_{i<N}\mathbb{I}(i\to n)/b_{j,i}$,
where $b_{j,i}\ge C$ for at least one $i<N$ by the connectedness
condition. This shows for any $j\in\mathbb{N}$ and for $n>M_{\epsilon}$,
that we get $\mathbb{I}(j\to n)\le\epsilon/C$.

We have for all $n>M_{\epsilon}$, 
\[
\frac{\|\vec{b}_{n}\|_{2}}{\|\vec{b}_{n}\|_{1}}\leq\max_{j}\frac{\sqrt{\|\vec{b}_{n}\|_{1}\cdot b_{n,j}}}{\|\vec{b}_{n}\|_{1}}=\max_{j<n}\sqrt{\mathbb{I}(j\to n)}<\sqrt{\epsilon/C}.
\]

Because $\epsilon>0$ is arbitrary, $\|\vec{b}_{n}\|_{1}/\|\vec{b}_{n}\|_{2}$
converges to infinity.

Let some $K>0$ be given. We now show that $\mathbb{P}[\tilde{a}_{n}<K|\omega=1]\to0$,
hence proving that $\tilde{a}_{n}$ converges to $\infty$ in probability.
We compute 
\[
z_{n}:=\frac{\mathbb{E}[\tilde{a}_{n}|\omega=1]-K}{\text{Std}[\tilde{a}_{n}|\omega=1]}=\frac{\|\vec{b}_{n}\|_{1}\cdot\frac{2}{\sigma^{2}}}{\|\vec{b}_{n}\|_{2}\cdot\frac{2}{\sigma}}-\frac{K}{\|\vec{b}_{n}\|_{2}\cdot\frac{2}{\sigma}}.
\]

Since $\|\vec{b}_{n}\|_{1}/\|\vec{b}_{n}\|_{2}\to\infty$, the first
term converges to infinity. By the connectedness condition, $\|\vec{b}_{n}\|_{2}\ge C$
for all large enough $n$, so the second term is bounded. This implies
$z_{n}\to\infty$. By Chebyshev's inequality, $\mathbb{P}[\tilde{a}_{n}<K|\omega=1]\le z_{n}^{-2}$.
This shows $\mathbb{P}[\tilde{a}_{n}<K|\omega=1]\to0$.

We note that this shows convergence in probability, but does not characterize
the joint distribution of actions, so these methods do not guarantee
almost sure convergence (without further structure on the networks
as in Section \ref{sec:Applications}).
\end{proof}

\subsection{Proof of Proposition \ref{prop:one_plus_epsilon}}
\begin{proof}
By the hypothesis of the proposition, there exists some $\epsilon>0$
and $N\in\mathbb{N}$ so that for all $i>N,$ $\sum_{j<i}M_{i,j}\ge1+\epsilon$.
Modify the network to set all links originating from any of the first
$N$ agents to have weight 0, that is, $M_{i,j}=0$ for all $i,j\le N$.

We prove by induction that $\sum_{j\le N}b_{i,j}\ge1+\epsilon$ for
all $i\ge N+1$ on the modified network. Consider agent $N+1$. Since
$\sum_{j<N+1}M_{N+1,j}\ge1+\epsilon$ and all of $(N+1)$'s out-degree
comes from links to agents in position $N$ or earlier, $\sum_{j\le N}b_{N+1,j}\ge1+\epsilon$.
By induction, suppose $\sum_{j\le N}b_{N+k,j}\ge1+\epsilon$ holds
for all $1\le k\le K$. A lower bound on $\sum_{j\le N}b_{N+K+1,j}$
is 
\begin{align*}
\sum_{j\le N}M_{N+K+1,j}+\sum_{N+1\le i\le N+K}M_{N+K+1,i}\cdot\left(\sum_{j^{'}\le N}b_{i,j'}\right) & \ge\sum_{j\le N}M_{N+K+1,j}+\sum_{N+1\le i\le N+K}M_{N+K+1,i}\cdot(1+\epsilon)\\
 & \ge\sum_{j\le N+K}M_{N+K+1,j}\\
 & \ge1+\epsilon,
\end{align*}
where in the first inequality we used the inductive hypothesis and
in the last inequality we used the fact that $N+K+1$ has an out-degree
of at least $1+\epsilon$. This establishes $\sum_{j\le N}b_{N+K+1,j}\ge1+\epsilon$
and so by induction, $\sum_{j\le N}b_{N+k,j}\ge1+\epsilon$ for all
$k\ge1.$

This result holds a fortiori on the original network with higher link
weights. By the pigeonhole principle, the original network satisfies
the connectedness condition with $C=\frac{1}{N}(1+\epsilon)$.

Now return to the modified network (so $M$ refers to the possibly
modified network weights). We develop some notation for the rest of
the proof and establish an intermediary lemma. For $j<i,$ let $\Psi[i\to j]$
be the set of all paths from $i$ to $j$. Let $\hat{\Psi}[i\to[N]]$
be the set of paths from $i$ to some agent $k\le N$, such that the
path contains no links between two different agents among the first
$N.$ Let $\hat{\Psi}[i\to[N]\mid j]$ be the subset of such paths
that pass through $j$.

For a path $\psi$ passing through agents $i_{1},i_{2},...,i_{L}$,
let $W(\psi):=\prod_{\ell=1}^{L-1}M_{i_{\ell+1},i_{\ell}}$ denote
its weight and let $D(\psi):=\prod_{\ell=1}^{L-1}(\sum_{j<i_{\ell}}M_{i_{\ell},j})$
denote the product of out-degrees of all agents on the path except
the last one.
\begin{lem}
\label{lem:splitting} For $n>N$, $\sum_{\psi\in\hat{\Psi}[n\to[N]]}\frac{W(\psi)}{D(\psi)}=1$.
\end{lem}
\begin{proof}
We prove by induction on $n.$ For $n=N+1,$ the set $\hat{\Psi}[n\to[N]]$
is the set of $N$ paths each consisting of a link from $N+1$ to
some agent $j\le N$. Each $\psi\in\hat{\Psi}[n\to[N]]$ therefore
has $D(\psi)=\sum_{j<N+1}M_{N+1,j}$, and the path terminating at
$j$ has $W(\psi)=M_{N+1,j}$. So the claim holds for $n=N+1.$ By
induction suppose it holds for all $n\le N+K$ for some $K\ge1.$
For $n=N+K+1,$ partition $\hat{\Psi}[n\to[N]]$ into $K+1$ groups.
For $1\le k\le K,$ each path $\psi\in\Psi(k)$ in the $k$th group
consists of the link $n\to(N+k)$ concatenated in front of a path
$\psi^{'}\in\hat{\Psi}[(N+k)\to[N]]$, so $\psi=((n,N+k),\psi^{'})$.
The final $(K+1$)th group consists of paths where $n$ links directly
to an agent among the first $N$. We have: 
\begin{align*}
\sum_{\psi\in\hat{\Psi}[n\to[N]]}\frac{W(\psi)}{D(\psi)} & =\sum_{k=1}^{K}\left(\sum_{\psi^{'}\in\hat{\Psi}[(N+k)\to[N]]}\frac{W((n,N+k),\psi^{'})}{D((n,N+k),\psi')}\right)+\sum_{j=1}^{N}\frac{M_{n,j}}{\sum_{h<n}M_{n,h}}\\
 & =\sum_{k=1}^{K}\left(\sum_{\psi^{'}\in\hat{\Psi}[(N+k)\to[N]]}\frac{M_{n,N+k}\cdot W(\psi^{'})}{\sum_{h<n}M_{n,h}\cdot D(\psi')}\right)+\sum_{j=1}^{N}\frac{M_{n,j}}{\sum_{h<n}M_{n,h}}\\
 & =\sum_{k=1}^{K}\left(\frac{M_{n,N+k}}{\sum_{h<n}M_{n,h}}\cdot1\right)+\sum_{j=1}^{N}\frac{M_{n,j}}{\sum_{h<n}M_{n,h}}\text{ (by inductive hypothesis)}\\
 & =\frac{\sum_{h<n}M_{n,h}}{\sum_{h<n}M_{n,h}}=1.
\end{align*}

So by induction, this claim holds for all $n>N.$
\end{proof}
We now return to the proof of Proposition \ref{prop:one_plus_epsilon}.
For $N<i<n$, 
\begin{align*}
b_{n,i} & =\sum_{\psi\in\Psi[n\to i]}W(\psi)=\sum_{\psi\in\Psi[n\to i]}\left[W(\psi)\cdot\left(\sum_{\hat{\psi}\in\hat{\Psi}[i\to[N]]}\frac{W(\hat{\psi})}{D(\hat{\psi})}\right)\right],
\end{align*}
 where the second equality follows because Lemma \ref{lem:splitting}
implies the term in the inner parentheses is 1. For a path $\psi$
passing through $i$, let $\psi[i]$ denote the subpath starting with
$i.$ So the above says 
\[
b_{n,i}=\sum_{\psi\in\hat{\Psi}[n\to[N]\mid i]}\frac{W(\psi)}{D(\psi[i])}.
\]
 Summing across $i$, we may re-index the sum by paths in $\hat{\Psi}[n\to[N]]$.
To be more precise, for $\psi\in\hat{\Psi}[n\to[N]]$, write $A(\psi)\subseteq\{N+1,...,n-1\}$
to be the set of agents that $\psi$ passes through. For each $j\in A(\psi),$
we have $\psi\in\hat{\Psi}[n\to[N]\mid j],$ so it contributes $W(\psi)/D(\psi[j])$
to the overall sum, 
\begin{align*}
\sum_{i=N+1}^{n-1}b_{n,i} & =\sum_{i=N+1}^{n-1}\sum_{\psi\in\hat{\Psi}[n\to[N]\mid i]}\frac{W(\psi)}{D(\psi[i])}\\
 & =\sum_{\psi\in\hat{\Psi}[n\to[N]]}\sum_{j\in A(\psi)}\frac{W(\psi)}{D(\psi[j])}\\
 & \le\sum_{\psi\in\hat{\Psi}[n\to[N]]}W(\psi)\cdot\sum_{j\in A(\psi)}\frac{1}{(1+\epsilon)^{|\psi[j]|-1}}\\
 & \le\sum_{\psi\in\hat{\Psi}[n\to[N]]}W(\psi)\cdot\frac{1}{\epsilon}
\end{align*}
 where $|\psi[j]|$ denotes the number of agents in the subpath $\psi[j]$.
On the third line, we used the fact that all agents on $\psi$ except
the last one must have out-degree at least $1+\epsilon,$ so $D(\psi[j])\ge(1+\epsilon)^{|\psi[j]|-1}$.
The result $\sum_{i=N+1}^{n-1}b_{n,i}\le\sum_{\psi\in\hat{\Psi}[n\to[N]]}W(\psi)\cdot\frac{1}{\epsilon}$
also holds for the original network, since we have not modified the
subnetwork among agents $N+1,...,n$.

We also have $\sum_{i=1}^{N}b_{n,i}=\sum_{\psi\in\hat{\Psi}[n\to[N]]}W(\psi)$.
On the original network, we have higher link weights among the first
$N$ agents, so we in fact have 
\[
\sum_{i=1}^{N}b_{n,i}\ge\sum_{\psi\in\hat{\Psi}[n\to[N]]}W(\psi).
\]
So, on the original network, 
\[
\sum_{i=1}^{N}\mathbb{I}(i\to n)\ge\frac{1}{1+1/\epsilon}.
\]
This inequality holds for every $n$, so it cannot be the case that
$\lim_{n\to\infty}\mathbb{I}(i\to n)=0$ for all $1\le i\le N$.
\end{proof}

\subsection{Proof of Example \ref{exa:decay}}
\begin{proof}
The coefficients $b_{i,n}$ satisfy the recurrence relation $b_{n,i}=2\delta b_{n-1,i}$
whenever $n-i>1$.

When $\delta=\frac{1}{2}$, from the recurrence relation, all predecessors'
signals are given equal weight, so by the law of large numbers, actions
converge to $\omega$ almost surely.

When $\delta>\frac{1}{2}$, $\sum_{k=0}^{\infty}\delta^{k}>1,$ so
$\sum_{j<i}M_{i,j}$ is bounded above 1 for $n$ large enough. So
by Proposition \ref{prop:one_plus_epsilon}, society does not learn
correctly.

The final case is $\delta<\frac{1}{2}$. We show that $\mathbb{P}[a_{n}\le\frac{1}{2}\mid\omega=1]$
is bounded away from 0 for all $n\ge1$, so $(a_{n})$ cannot converge
in probability to $\omega$.

Without loss of generality, normalize to $\sigma=1.$ From the recurrence
relation for the coefficients $b_{n,i},$ it is easy to check that
$\tilde{a}_{n}=2\delta\tilde{a}_{n-1}+\tilde{s}_{n}-\delta\tilde{s}_{n-1}$
for each $n.$ Evidently $b_{i+1,i}=\delta,$ so from recursion, $b_{n,i}=(2\delta)^{n-i-1}\delta$
for $i\le n-1$, $b_{n,n}=1$. So, $\tilde{a}_{n}=\tilde{s}_{n}+\delta\sum_{j=0}^{n-2}(2\delta)^{j}\cdot\tilde{s}_{n-1-j},$
meaning 
\[
\tilde{a}_{n}\mid(\omega=1)\sim\mathcal{N}\left(1+\delta\frac{1-(2\delta)^{n-1}}{1-2\delta},1+\delta^{2}\frac{1-(4\delta^{2})^{n-1}}{1-4\delta^{2}}\right).
\]
So 
\[
\mathbb{P}[\tilde{a}_{n}\le0\mid\omega=1]\ge\Phi\left(-1-\frac{\delta}{1-2\delta}\right)
\]
 for all $n$, which implies for all $n$, 
\[
\mathbb{P}[a_{n}\le\frac{1}{2}\mid\omega=1]\ge\Phi\left(-1-\frac{\delta}{1-2\delta}\right).
\]
\end{proof}

\subsection{Proof of Proposition \ref{prop:prob_mislearn}}

We first state and prove a lemma that gives the ex ante distribution
of agent $n$'s log action.
\begin{lem}
\label{lem:prob_correct}When $\omega=1$, the log action of agent
$n$ on any weighted network is distributed as 
\[
\tilde{a}_{n}\sim\mathcal{N}\left(\frac{2}{\sigma^{2}}\|\vec{b}_{n}\|_{1},\frac{4}{\sigma^{2}}\|\vec{b}_{n}\|_{2}^{2}\right).
\]
\end{lem}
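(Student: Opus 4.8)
The plan is to combine the path-counting representation of actions (Proposition \ref{prop:representation}) with the fact that log-likelihood ratios are affine images of the Gaussian signals (Lemma \ref{lem:llh_normal}), which reduces the statement to computing the mean and variance of a linear combination of independent normal random variables.

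First I would use Proposition \ref{prop:representation} to write $\tilde{a}_n = \sum_{j=1}^n b_{n,j}\,\tilde{s}_j$, where $b_{n,j} = (I-M)^{-1}_{nj}$ and the sum is effectively over $j\le n$ since $(I-M)^{-1}$ is lower-triangular. Substituting $\tilde{s}_j = 2 s_j/\sigma^2$ from Lemma \ref{lem:llh_normal} gives $\tilde{a}_n = \tfrac{2}{\sigma^2}\sum_{j=1}^n b_{n,j}\, s_j$. Conditional on $\omega = 1$, Assumption \ref{assu:gaussian_signals} says the $s_j$ are i.i.d.\ $\mathcal{N}(1,\sigma^2)$, so $\tilde{a}_n$ is a fixed linear combination of independent Gaussians and is therefore itself Gaussian; it remains only to identify its two parameters.

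For the mean, linearity of expectation gives $\mathbb{E}[\tilde{a}_n \mid \omega = 1] = \tfrac{2}{\sigma^2}\sum_{j=1}^n b_{n,j}$; since each $b_{n,j}$ is a sum of products of nonnegative edge weights it is itself nonnegative, so $\sum_{j=1}^n b_{n,j} = \|\vec{b}_n\|_1$ and the mean is $\tfrac{2}{\sigma^2}\|\vec{b}_n\|_1$. For the variance, independence of the $s_j$ yields $\mathrm{Var}[\tilde{a}_n \mid \omega = 1] = \tfrac{4}{\sigma^4}\sum_{j=1}^n b_{n,j}^2\,\mathrm{Var}(s_j) = \tfrac{4}{\sigma^4}\cdot\sigma^2\sum_{j=1}^n b_{n,j}^2 = \tfrac{4}{\sigma^2}\|\vec{b}_n\|_2^2$, as claimed.

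There is no real obstacle here: the lemma is essentially a corollary of Proposition \ref{prop:representation} and Lemma \ref{lem:llh_normal} together with closure of the Gaussian family under independent linear combinations. The only points worth stating explicitly are the nonnegativity of the path-count weights $b_{n,j}$ (so that $\sum_j b_{n,j}$ is literally $\|\vec{b}_n\|_1$ rather than some signed quantity) and careful bookkeeping of the powers of $\sigma^2$ through the change of variables.
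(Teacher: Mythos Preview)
Your argument is correct and follows essentially the same route as the paper: apply Proposition~\ref{prop:representation} to write $\tilde{a}_n=\sum_j b_{n,j}\tilde{s}_j$, substitute $\tilde{s}_j=2s_j/\sigma^2$ via Lemma~\ref{lem:llh_normal}, and compute the mean and variance of the resulting linear combination of i.i.d.\ $\mathcal{N}(1,\sigma^2)$ variables. Your explicit remark that $b_{n,j}\ge 0$ so that $\sum_j b_{n,j}=\|\vec b_n\|_1$ is a small but welcome clarification beyond what the paper states.
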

\begin{proof}
By Proposition \ref{prop:representation}, $\tilde{a}_{n}=\sum_{i=1}^{n}b_{n,i}\tilde{s}_{i}$.
This is equal to $\sum_{i=1}^{n}2b_{n,i}s_{i}/\sigma^{2}$ according
to Lemma \ref{lem:llh_normal}. Conditional on $\omega=1$, $(s_{i})$
are i.i.d$.$ $\mathcal{N}(1,\sigma^{2})$ random variables, so 
\[
\sum_{i=1}^{n}\frac{2b_{n,i}}{\sigma^{2}}s_{i}\sim\mathcal{N}\left(\frac{2}{\sigma^{2}}\sum_{i=1}^{n}b_{n,i},\frac{4}{\sigma^{2}}\sum_{i=1}^{n}b_{n,i}^{2}\right)=\mathcal{N}\left(\frac{2}{\sigma^{2}}\|\vec{b}_{n}\|_{1},\frac{4}{\sigma^{2}}\|\vec{b}_{n}\|_{2}^{2}\right).\qedhere
\]
\end{proof}
Now we give the proof of Proposition \ref{prop:prob_mislearn}.
\begin{proof}
By Lemma \ref{lem:prob_correct}, $\tilde{a}_{n}|(\omega=1)\sim\mathcal{N}\left(\frac{2}{\sigma^{2}}\|\vec{b}_{n}\|_{1},\frac{4}{\sigma^{2}}\|\vec{b}_{n}\|_{2}^{2}\right).$
So using properties of the Gaussian distribution, 
\[
\mathbb{P}[\tilde{a}_{n}>0\mid\omega=1]=\Phi\left(\frac{\frac{2}{\sigma^{2}}\|\vec{b}_{n}\|_{1}}{\frac{2}{\sigma}\|\vec{b}_{n}\|_{2}}\right)=\Phi\left(\frac{1}{\sigma}\cdot\frac{\|\vec{b}_{n}\|_{1}}{\|\vec{b}_{n}\|_{2}}\right).\qedhere
\]
\end{proof}

\subsection{Proof of Proposition \ref{prop:Uniform-Weights}}
\begin{proof}
The numbers of paths from various agents to agent $i$ satisfy the
recurrence relation $b_{n,i}=(1+q)b_{n-1,i}$ when $n-i>1$. By a
simple computation, we find that 
\[
\tilde{a}_{n}=\sum_{i=1}^{n-1}q(1+q)^{n-i-1}\tilde{s}_{i}+\tilde{s}_{n}.
\]
Since $\tilde{s}_{i}$ are independent Gaussian random variables,
our argument uses the fact that for $n$ large, $\tilde{a}_{n}$ has
the same sign as another Gaussian random variable, whose mean and
variance we can compute.

We first show that $\tilde{a}_{n}$ converges to $-\infty$ or $\infty$
almost surely. Consider the random variable 
\[
X_{n}(\vec{s}):=\frac{1}{2}\sum_{i=1}^{n-1}(1+q)^{-i}\tilde{s}_{i},
\]
where $\vec{s}:=(s_{i})_{i=1}^{\infty}$ is the profile of private
signal realizations. By a standard result, $X_{n}(\vec{s})$ converges
almost surely to a random variable $Y(\vec{s})$ such that the conditional
distribution of $Y$ in each state of the world is Gaussian. For each
$n,$ $\tilde{a}_{n}(\vec{s})=2q(1+q)^{n-1}\cdot X_{n}(\vec{s})+\tilde{s}_{n}$.
Since $\sum_{n=1}^{\infty}\mathbb{P}[\tilde{s}_{n}>n]<\infty,$ by
the Borel\textendash Cantelli lemma, $\mathbb{P}[\tilde{s}_{n}>n\text{ infinitely often}]=0$.
So almost surely, $\lim_{n\to\infty}\tilde{a}_{n}(\vec{s})=\lim_{n\to\infty}2q(1+q)^{n-1}\cdot Y(\vec{s})+\tilde{s}_{n}\in\{-\infty,\infty\}$.
This in turn shows that $a_{n}$ converges to 0 or 1 almost surely.

Now we show $\mathbb{P}[a_{n}\to0\mid\omega=1]=\Phi\left(-\sigma^{-1}\sqrt{(q+2)/q}\right),$
which is the same probability as $\mathbb{P}[\tilde{a}_{n}\to-\infty\mid\omega=1]$.
The random variable $Y(\vec{s})$ that $X_{n}(\vec{s})$ converges
to a.s. has the distribution $\mathcal{N}(1/(\sigma^{2}q),1/(\sigma^{2}q(q+2)))$
when $\omega=1$, and $\tilde{a}_{n}$ has the same sign as $X_{n}(\vec{s})$
with probability converging to 1 for $n$ large. The distribution
$\mathcal{N}(1/(\sigma^{2}q),1/(\sigma^{2}q(q+2)))$ assigns $\Phi\left(-\sigma^{-1}\sqrt{(q+2)/q}\right)$
probability to the positive region. The symmetric argument holds for
$\omega=0.$
\end{proof}

\subsection{Proof of Proposition \ref{prop:very_sparse}}

First, we derive a closed-form expression for the probability that
the $n$th agent thinks the correct state is more likely in the uniform
weights network, conditional on $\omega=1$.
\begin{lem}
\label{lem:prob_n_correct_ER} In the $q$-uniform weights network,
\[
\mathbb{P}[\tilde{a}_{n}>0\mid\omega=1]=\Phi\left(\frac{1}{\sigma}\cdot\frac{(1+q)^{n-1}\cdot\sqrt{(2+q)}}{\sqrt{2+q(1+q)^{2n-2}}}\right).
\]
This probability is strictly increasing in $n$ when $0<q\le1.$
\end{lem}
\begin{proof}
From the proof of Proposition \ref{prop:Uniform-Weights}, we have
\[
\tilde{a}_{n}=\sum_{i=1}^{n-1}q(1+q)^{n-i-1}\tilde{s}_{i}+\tilde{s}_{n}
\]
 where the different $\tilde{s}_{i}$'s are conditionally independent
given $\omega=1,$ with $\tilde{s}_{i}\mid(\omega=1)\sim\mathcal{N}\left(2/\sigma^{2},4/\sigma^{2}\right)$
from Lemma \ref{lem:llh_normal}. Thus, the sum $\tilde{a}_{n}$ is
conditionally Gaussian with a mean of 
\begin{align*}
\frac{2}{\sigma^{2}}\cdot\left[1+\sum_{i=1}^{n-1}q(1+q)^{n-i-1}\right] & =\frac{2}{\sigma^{2}}\cdot\left[1+q\cdot\frac{(1+q)^{n-1}-1}{(1+q)-1}\right]\\
 & =\frac{2}{\sigma^{2}}\cdot(1+q)^{n-1}
\end{align*}
 and a variance of 
\begin{align*}
\frac{4}{\sigma^{2}}\cdot\left[1+\sum_{i=1}^{n-1}q^{2}(1+q)^{2n-2i-2}\right] & =\frac{4}{\sigma^{2}}\cdot\left[1+q^{2}\cdot\frac{(1+q)^{2n-2}-1}{(1+q)^{2}-1}\right]\\
 & =\frac{4}{\sigma^{2}}\cdot\frac{2+(1+q)^{2n-2}q}{2+q}.
\end{align*}
 Thus, 0 is 
\[
\frac{\frac{2}{\sigma^{2}}\cdot(1+q)^{n-1}}{\sqrt{\frac{4}{\sigma^{2}}\cdot\frac{2+(1+q)^{2n-2}q}{2+q}}}=\frac{1}{\sigma}\cdot\frac{(1+q)^{n-1}\cdot\sqrt{(2+q)}}{\sqrt{2+q(1+q)^{2n-2}}}
\]
 standard deviations below the mean in the distribution of $\tilde{a}_{n}\mid(\omega=1)$,
so 
\[
\mathbb{P}[\tilde{a}_{n}>0\mid\omega=1]=\Phi\left(\frac{1}{\sigma}\cdot\frac{(1+q)^{n-1}\cdot\sqrt{(2+q)}}{\sqrt{2+q(1+q)^{2n-2}}}\right).
\]
 To see that this expression is strictly increasing, let $n\ge1.$
Then, 
\begin{align*}
\mathbb{P}[\tilde{a}_{n+1}>0\mid\omega=1] & =\Phi\left(\frac{1}{\sigma}\cdot\frac{(1+q)\cdot(1+q)^{n-1}\cdot\sqrt{(2+q)}}{\sqrt{2+(1+q)^{2}\cdot q(1+q)^{2n-2}}}\right)\\
 & >\Phi\left(\frac{1}{\sigma}\cdot\frac{(1+q)\cdot(1+q)^{n-1}\cdot\sqrt{(2+q)}}{\sqrt{(1+q)^{2}\cdot2+(1+q)^{2}\cdot q(1+q)^{2n-2}}}\right)\\
 & =\Phi\left(\frac{1}{\sigma}\cdot\frac{(1+q)^{n-1}\cdot\sqrt{(2+q)}}{\sqrt{2+q(1+q)^{2n-2}}}\right)\\
 & =\mathbb{P}[\tilde{a}_{n}>0\mid\omega=1]
\end{align*}
 as desired.
\end{proof}
Now we give the proof of Proposition \ref{prop:very_sparse}.
\begin{proof}
Let $\mathbb{P}[\tilde{a}_{2}>0\mid\omega=1]$ on the $q^{*}$-uniform
weights network be denoted by $p$. Lemma \ref{lem:prob_n_correct_ER}
implies $\mathbb{P}[\tilde{a}_{n}>0\mid\omega=1]$ is strictly increasing
in $n$ on the $q^{*}$-uniform weights network, so $p>\ensuremath{\Phi(1/\sigma)}$
and, furthermore, $\mathbb{P}[\tilde{a}_{n}>0\mid\omega=1]\ge p$
for all $n\ge2$ on the same network.

The function 
\begin{eqnarray*}
q & \mapsto & \Phi\left(\frac{1}{\sigma}\cdot\frac{(1+q)^{N-1}\cdot\sqrt{(2+q)}}{\sqrt{2+q(1+q)^{2N-2}}}\right)
\end{eqnarray*}
 is continuous and equals $\Phi(1/\sigma)$ when $q=0$. So we may
find a small enough $\bar{q}\in(0,q^{*})$ so that whenever $0<q<\bar{q},$
\[
\Phi\left(\frac{1}{\sigma}\cdot\frac{(1+q)^{N-1}\cdot\sqrt{(2+q)}}{\sqrt{2+q(1+q)^{2N-2}}}\right)<p.
\]
From the monotonicity result of Lemma \ref{lem:prob_n_correct_ER},
this also implies 
\[
\Phi\left(\frac{1}{\sigma}\cdot\frac{(1+q)^{n-1}\cdot\sqrt{(2+q)}}{\sqrt{2+q(1+q)^{2n-2}}}\right)<p
\]
 for all $2\le n\le N$.
\end{proof}

\subsection{Proof of Proposition \ref{prop:twogroups}}
\begin{proof}
Suppose we have two groups, and agents observe predecessors in the
same group with weight $q_{s}$ and predecessors in the other group
with weight $q_{d}$. Then the coefficients $b_{n,i}$ satisfy the
recurrence relation 
\[
b_{n,i}=q_{d}b_{n-1,i}+(1+q_{s})b_{n-2,i}
\]
when $n-i>2$. Since the network is translation invariant, $b_{n,i}$
only depends on $n-i$. By a standard algebraic fact, there exist
constants $c_{+},c_{-},\zeta_{+},\zeta_{-}$ (only depending on $n-i)$
so that 
\[
b_{n,i}=c_{+}\zeta_{+}^{n-i}+c_{-}\zeta_{-}^{n-i},
\]
where $\zeta_{\pm}$ are the solutions to the polynomial $x^{2}-q_{d}x-(1+q_{s})=0$
and $c_{+},c_{-}$ are constants that we can determine from $b_{2,1}$
and $b_{3,1}$. We compute 
\[
\zeta_{\pm}=\frac{q_{d}\pm\sqrt{4q_{s}+q_{d}+4}}{2},
\]
 where $\zeta_{+}>1$ and $\zeta_{-}<0.$ By arguments analogous to
those in the proof of Proposition \ref{prop:Uniform-Weights}, we
may again establish that $a_{n}$ converges to 0 or 1 almost surely.
We now analyze the probability of mislearning.

Since $\zeta_{+}>|\zeta_{-}|$, the exponential term with base $\zeta_{+}$
dominates as $n$ grows large. This shows $c_{+}>0$, since $b_{n,i}$
counts the number of weighted paths in a network so it must be a positive
number. This also shows that $\mathbb{P}[\tilde{a}_{n}<0\mid\omega=1]\to\mathbb{P}[\sum_{i=0}^{\infty}(\zeta_{+})^{-i}\tilde{s}_{i}<0\mid\omega=1]$
as $n\to\infty.$ Conditional on $\omega=1,$ the sum $\sum_{i=0}^{\infty}(\zeta_{+})^{-i}\tilde{s}_{i}$
has the distribution 
\[
\mathcal{N}\left(\frac{2}{\sigma^{2}(\zeta_{+}-1)},\frac{4}{\sigma^{2}(\zeta_{+}-1)(\zeta_{+}+1)}\right),
\]
so it is easy to show that the probability assigned to the negative
region is increasing in $\zeta_{+}$.

Having shown that the probability of mislearning is monotonically
increasing in $\zeta_{+}$, we can take comparative statics: 
\[
\frac{\partial\zeta_{+}}{\partial q_{d}}=\frac{q_{d}}{2\sqrt{4q_{s}+q_{d}+4}}+\frac{1}{2}\text{ and }\frac{\partial\zeta_{+}}{\partial q_{s}}=\frac{1}{\sqrt{4q_{s}+q_{d}+4}}.
\]
It is easy to see that $\partial\zeta_{+}/\partial q_{d}>\partial\zeta_{+}/\partial q_{s}>0$
for all $q_{s}\geq0$ and $q_{d}>0$.
\end{proof}

\subsection{Proof of Proposition \ref{prop:disagreement}}
\begin{proof}
Define $\kappa_{q}$ to be a naive agent $i$'s log-likelihood ratio
of state $\omega=1$ versus state $\omega=0$ upon observing one neighbor
$j$ who picks action 1 with weight $q$. Then we have:
\[
\kappa_{q}\coloneqq\ln\left(\frac{\mathbb{P}[\omega=1|s_{j}\ge0]}{\mathbb{P}[\omega=0|s_{j}<0]}\right)>0,
\]
where $\mathbb{P}$ is taken under $i$'s beliefs about the conditional
distributions of $s_{j}$ under Assumption \ref{assu:wrong_precision},
that is $s_{j}\mid(\omega=1)\sim\mathcal{N}(1,\sigma^{2}/q)$ and
$s_{j}\mid(\omega=0)\sim\mathcal{N}(-1,\sigma^{2}/q)$. In particular,
this log likelihood $\kappa_{q}$ is decreasing in $q$ and so $\kappa_{q_{s}}-\kappa_{q_{d}}>0$
for $q_{s}>q_{d}$.

By symmetry of the Gaussian distribution, the log-likelihood ratio
after observing one neighbor who chooses action 0 with weight $q$
is $-\kappa_{q}$.

Suppose after $2n$ agents have moved, the actions taken so far involve
every odd-numbered agent playing 1and every even-numbered agent playing
0. Then agent $2n+1$ has a log-likelihood ratio of $n(\kappa_{q_{s}}-\kappa_{q_{d}})$
from her social observations. The probability that private signal
$s_{2n+1}$ is so strongly in favor of $\omega=0$ as to make $2n+1$
play $0$ is 
\[
\epsilon_{n}\coloneqq\mathbb{P}\left[s_{i}\in\mathbb{R}:\ln\left(\frac{\mathbb{P}[\omega=1|s_{i}]}{\mathbb{P}[\omega=0|s_{i}]}\right)<-n(\kappa_{q_{s}}-\kappa_{q_{d}})\mid\omega=1\right]
\]
For the Gaussian distribution, $\ln\left(\mathbb{P}[\omega=1|s_{i}]/\mathbb{P}[\omega=0|s_{i}]\right)=2s_{i}/\sigma^{2}$,
so 
\[
\sum_{n=1}^{\infty}\epsilon_{n}=\sum_{n=1}^{\infty}\Phi(-\frac{\sigma^{2}}{2}n(\kappa_{q_{s}}-\kappa_{q_{d}});1,\sigma^{2})<\infty
\]
because the Gaussian distribution function tends to 0 faster than
geometrically. This shows that there is apositive probability that
every odd-numbered agent plays 1.

By an analogous argument, there is also a positive probability that
every even-numbered agent plays 0. In that argument we would use the
fact that 
\[
\sum_{n=1}^{\infty}\left[1-\Phi(\frac{\sigma^{2}}{2}n(\kappa_{q_{s}}-\kappa_{q_{d}});1,\sigma^{2})\right]<\infty.\qedhere
\]
\end{proof}
\bibliographystyle{ecta}
\bibliography{network_naive_sequential}

\end{document}